\documentclass[10pt]{article}
\usepackage[letterpaper,margin=0.85in]{geometry}
\usepackage{times,natbib}
\setcitestyle{authoryear,round,citesep={;},aysep={,},yysep={;}}
\usepackage[T1]{fontenc}
\usepackage[utf8]{inputenc}
\usepackage{amsmath,amssymb,amsthm}
\usepackage{graphicx,booktabs,tabularx,multirow}
\usepackage{float}
\floatstyle{ruled}
\newfloat{algorithm}{tbp}{loa}
\floatname{algorithm}{Algorithm}
\usepackage{enumitem,microtype}
\usepackage{placeins}
\usepackage{hyperref,url}
\hypersetup{hidelinks,pdftitle={Evaluating Recoverability and Selective Prediction in Partially Observed PDE Systems},pdfauthor={}}
\graphicspath{{figures/}}
\newtheorem{proposition}{Proposition}
\newcommand{\scrs}{\mathrm{sCRS}}
\newcommand{\Rset}{\mathcal{R}_{\tau}}
\newcommand{\Bset}{\mathcal{B}_{\tau}}
\newcommand{\ind}{\mathbf{1}}
\DeclareMathOperator{\Cov}{Cov}
\DeclareMathOperator{\diag}{diag}
\DeclareMathOperator{\rank}{rank}
\DeclareMathOperator{\Unif}{Unif}
\DeclareMathOperator{\AUROC}{AUROC}
\setlist[itemize]{nosep,leftmargin=*}
\setlist[enumerate]{nosep,leftmargin=*}
\title{Recoverability Is a Subspace Property: A Benchmark for Certified State Estimation from Partial PDE Observations}
\newcommand{\namedversion}{1}
\author{Qingwei Dong$^{1,2}$ \quad Peng Zeng$^{1,2,*}$ \quad Guangxi Wan$^{1,2}$\\
Jiyuan Zhang$^{1,2,3}$ \quad Ruikai Liu$^{1,2,3}$ \quad Yuqi Liu$^{1,2}$\\[0.5em]
\small $^1$State Key Laboratory of Robotics, Shenyang Institute of Automation,\\
\small Chinese Academy of Sciences, Shenyang 110016, China\\
\small $^2$Key Laboratory of Networked Control Systems, Chinese Academy of Sciences,\\
\small Shenyang 110016, China\\
\small $^3$University of Chinese Academy of Sciences, Beijing 100049, China\\
\small $^*$Corresponding author: Peng Zeng, \texttt{zp@sia.cn}}
\date{}
\begin{document}
\maketitle
\begin{abstract}
When reconstructing the hidden state of a partial differential equation (PDE) system from partial observations, aggregate prediction error measures performance on a given data distribution but does not reveal how strongly the observations constrain each predicted direction. We introduce UniPDE-Bench, a direction-wise evaluation protocol that incorporates local observation geometry into state-estimation assessment, providing a reference for prediction recovery and confidence-based selection that is independent of the estimator. The protocol whitens the joint observation Jacobian by the noise covariance and normalizes it by a state metric. Its complete right singular basis represents joint variations of the state, which a relative sensitivity threshold partitions into retained and below-threshold directions. In this common basis, the protocol evaluates recovery and the agreement between prediction claims and the geometric partition, while recovery and abstention curves describe confidence-based selection at different claim coverages. In the simulated tasks and observation configurations studied here, confidence rules whose overall ranking exceeds chance can still exhibit below-random abstention on below-threshold directions at some high claim coverages. By separating empirical recovery from direction-selection quality, the protocol relates prediction performance to local observation sensitivity and provides an evaluation of partially observed state estimators beyond aggregate error.
\end{abstract}

\section{Introduction}
\label{sec:introduction}
When reconstructing the internal state or an unknown interface of a PDE system from partial observations, learned estimators use both the observations and statistical correlations in the training data. Aggregate reconstruction error measures prediction performance on a given distribution, but does not distinguish how strongly the observations constrain different state variations. A state direction here refers to a joint variation of the parameters: similar aggregate errors can correspond to different recovery outcomes along these directions. Comparing estimators therefore requires assessing both prediction accuracy and its relationship to the information supplied by the observations.

Consider $y=c_1+c_2$. Changing either coefficient alone changes the observation, whereas equal and opposite changes leave the sum unchanged. Coordinate-wise sensitivity therefore fails to reveal all ambiguities. If the data distribution exhibits strong correlations between the coefficients, an estimator may nevertheless predict both accurately. This example distinguishes predictability supported by a prior from distinguishability supplied by observations, motivating their comparison along joint directions.

Spectral analysis in classical inverse problems provides tools for characterizing observational constraints \citep{backus,tarantola}. Measurement-space and null-space decompositions have also been used to diagnose prior-induced structures in tomographic reconstruction \citep{hallucination}. Scientific machine learning benchmarks compare prediction performance across tasks and distributions \citep{pdebench,thewell}; selective prediction studies the trade-off between prediction coverage and error, while conformal prediction provides statistical coverage guarantees under appropriate assumptions \citep{geifman,angelopoulos}. These lines of work address observation maps, predictions, and statistical properties, respectively. We study their connection in direction-wise evaluation: how can local observation geometry provide a common reference for assessing both the directions a model recovers and those its confidence prioritizes?

We propose UniPDE-Bench, a direction-wise evaluation protocol whose reference is local observation geometry, independent of the estimator's predictions. The observation map matches the inputs actually used by the model, including observation histories and auxiliary information. The joint observation Jacobian describes the local response to simultaneous parameter changes. Noise whitening and state-metric normalization specify the comparison scales; a complete right singular basis retains all state directions, including the right null space. A relative sensitivity threshold then partitions them into retained and below-threshold directions. This partition describes relative local sensitivity near a working point. The below-threshold set includes nonzero low-sensitivity directions and does not identify globally unrecoverable directions.

After projecting predictions and ground truth into the same basis, the protocol checks whether errors on retained directions meet a recovery tolerance and whether the model makes prediction claims on below-threshold directions. Confidence determines which directions are claimed first; the remaining directions are abstained from, meaning that no prediction claim is made for them. Claim coverage is the fraction of directions claimed, distinct from the statistical coverage of prediction intervals. Sweeping claim coverage yields recovery-yield and below-threshold-abstention curves that separate empirical recovery from direction selection and reveal budget-dependent differences missed by a single operating point or overall ranking score. The relationship between the areas under these curves and their corresponding AUROCs clarifies the ranking information in the summary metrics.

We compare observation configurations and estimators using controlled local inverse problems and several PDE simulation tasks. The controlled experiments show that low prediction error under a correlated prior can coexist with a below-threshold subspace. Across models, accurate field reconstruction can also accompany direction selection that departs from the geometric reference. In the transport configuration studied here, conformal interval widths achieve an overall geometric ranking score above chance, yet their mean below-threshold abstention falls below the random reference at some high claim coverages. These empirical findings support reporting prediction recovery, observation geometry, and budget-dependent abstention together, rather than judging direction selection from aggregate error or an overall ranking score alone.

Our main contributions are as follows:
\begin{enumerate}
\item We combine an explicit observation model with complete right-singular-space analysis to establish a local geometric reference, enabling direction-wise comparisons between model predictions and the observation sensitivities associated with their actual inputs.
\item We define separate measures of prediction recovery, geometric agreement, and confidence-based selection, and derive the relationship between coverage-curve areas and AUROC, distinguishing recovery counts, ranking quality, and budget-dependent abstention.
\item Through controlled inverse problems and comparisons across configurations and models, we demonstrate differences between prediction accuracy and direction selection, and between overall ranking and performance at specific budgets, providing empirical evidence for interpreting partially observed state estimation.
\end{enumerate}

\section{Related Work}
\label{sec:related}
\paragraph{Inverse-problem geometry and subspace analysis.}
Resolution analysis and spectral regularization characterize how strongly data constrain different state components by examining directional responses of the observation map \citep{backus,tarantola}. Bayesian likelihood-informed subspaces and low-rank posterior approximations also account for prior geometry \citep{cui,spantini}; their interpretation cannot be directly equated with a sensitivity partition under an arbitrary state metric. Active subspace methods identify influential joint directions from gradient second moments averaged over an input distribution to construct reduced response surfaces \citep{constantine}. This distribution-averaged construction differs from the pointwise, noise-whitened observation Jacobian used here. UniPDE-Bench uses spectral geometry to construct local labels independent of model predictions, given an observation map, working point, state metric, and noise model. Its focus is to turn existing analytical tools into an evaluation reference against which observation sensitivity, prediction recovery, and direction selection can be compared.

\paragraph{Learned reconstruction and null-space diagnostics.}
Measurement-space and null-space decompositions have been used to analyze how reconstructions relate to observational constraints. \citet{hallucination} use this decomposition to diagnose prior-induced structures in tomographic reconstruction; \citet{gottschling} discuss kernel-induced limitations of inverse maps and the trade-off between accuracy and stability. These studies explain learned reconstruction through reconstructed components and properties of inverse maps, respectively. UniPDE-Bench applies this perspective to prediction claims: it relates empirical recovery to confidence rankings in a common directional basis and examines which directions are retained or abstained from at different claim budgets. Its local reference includes both the exact null space and nonzero low-sensitivity directions below a relative threshold. It evaluates predictions against the specified geometric partition without treating all prior-supported predictions as errors.

\paragraph{Scientific machine learning benchmarks.}
PDEBench and The Well support model comparisons across PDE tasks and data distributions \citep{pdebench,thewell}. PDEArena compares neural PDE surrogates and their generalization across equation parameters and time scales \citep{pdearena}. RealPDEBench pairs real-world measurements with numerical simulations to evaluate prediction and transfer between simulated and measured data \citep{realpdebench}. \citet{physbias} examine forecasting across physical regimes and distribution shifts, while the common task framework of \citet{seismicctf} includes sparse wavefield reconstruction. These studies address task coverage, generalization, and data realism. UniPDE-Bench instead evaluates recovery and confidence-based selection along directions defined by the local observation map. Its geometric labels must match the estimator's actual inputs, including observation histories and auxiliary channels. This analysis complements task-level evaluation; applying it to measured systems additionally requires a suitable differentiable observation model, state metric, and noise specification.

\paragraph{Evaluation protocols and reporting.}
Benchmark conclusions also depend on baseline strength and reporting choices. \citet{mcgreivy} document weak numerical baselines and reporting biases in machine learning for fluid-related PDEs. \citet{ctf} propose a common task framework with forecasting and state-reconstruction tasks, multiple metrics, and evaluation on hidden test sets. UniPDE-Bench addresses a complementary reporting issue within a specified observation configuration: aggregate error and a single ranking score can obscure differences between prediction recovery, local observation sensitivity, and abstention at particular claim budgets. Reporting these quantities separately adds an observation-geometric reference to existing prediction-accuracy metrics.

\paragraph{Selective prediction and uncertainty.}
Selective prediction studies the trade-off between prediction coverage and error by accepting or rejecting predictions \citep{elyaniv,geifman}. Conformal prediction provides statistical coverage guarantees under exchangeability and appropriate calibration conditions \citep{angelopoulos}, while ensembles provide uncertainty signals based on variation across model predictions \citep{lakshminarayanan}. Exchangeability is a statistical property of samples; injectivity of the observation map concerns whether observations distinguish states. These properties are not equivalent. We use joint variations of state parameters as the units of selection and separately evaluate whether prediction errors meet tolerance and whether confidence prioritizes retained directions. Interval widths and ensemble spread are used for empirical ranking analysis, focusing on their correspondence with local observation geometry rather than validating or refuting interval-coverage guarantees from ranking results.

\section{UniPDE-Bench: Constructing Observation Subspaces}
\label{sec:geometry}
UniPDE-Bench uses local observation geometry, independent of the estimator, as a reference for evaluating prediction recovery and confidence-based selection in a common directional basis. The protocol distinguishes the local sensitivity of observations to joint state variations from empirical prediction performance. It examines both the directions recovered by a model and the directions it prioritizes at different claim coverages, separating recovery capability from direction-selection quality.

As shown in Figure~\ref{fig:overview}, evaluation comprises observation-subspace construction, direction-wise assessment and confidence-based selection, and joint evaluation. First, the joint observation Jacobian corresponding to the model's actual inputs is whitened by the noise covariance and normalized by the state metric. Singular value decomposition constructs a complete right singular basis, and a relative sensitivity threshold separates retained from below-threshold directions. Next, model predictions and test ground truth are projected into this basis. Directional errors and a recovery tolerance determine successful-recovery labels on retained directions, while calibration or uncertainty records provide confidence rankings in the same basis. Finally, the ranking is held fixed as increasing numbers of directions are claimed in descending confidence order; all remaining directions are abstained from. Recovery-yield and below-threshold-abstention curves describe performance across claim coverages. Test ground truth is used only for scoring, not for confidence ranking. Geometric labels provide the evaluation reference rather than directly determining which directions are claimed.

\begin{figure}[t]
\centering
\includegraphics[width=\linewidth]{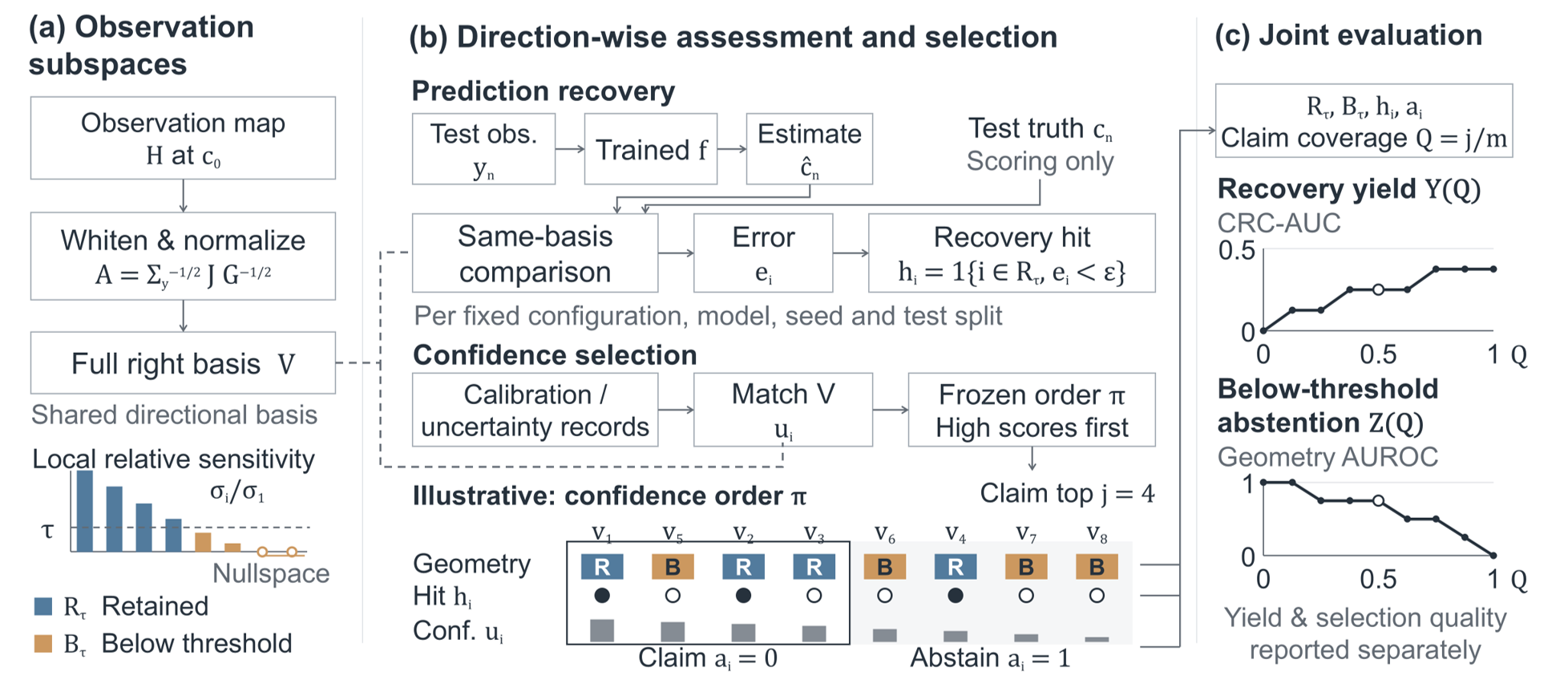}
\caption{UniPDE-Bench evaluation workflow. Observation geometry supplies a directional basis and sensitivity labels; model predictions and ground truth are used only to calculate directional errors and recovery labels. Confidence rankings and the claim budget jointly determine the claimed directions, which are then scored against the geometric labels. Below-threshold directions are not automatically excluded.}
\label{fig:overview}
\end{figure}

\subsection{Problem Setup and Observation Specification}
Let $c\in\mathcal{C}\subseteq\mathbb{R}^m$ denote the interface parameters to be estimated and $y\in\mathbb{R}^p$ the observations actually used by the model. Given a differentiable observation map $\mathcal{H}$ and a working point $c_0$, define
\begin{equation}
y=\mathcal{H}(c)+\eta,\qquad J=D\mathcal{H}(c_0),\qquad \Cov(\eta)=\Sigma_y\succ0.
\label{eq:observation}
\end{equation}
We use a positive-definite shape metric $G$, with $\|d\|_G^2=d^\top Gd$, and represent the local state by $z=G^{1/2}(c-c_0)$. The observation specification fixes the PDE, sensor functions, auxiliary channels, observation history, $c_0$, $G$, $\Sigma_y$, and the discretization and differentiation settings. A separate scoring specification fixes the relative sensitivity threshold, error tolerance, confidence rule, and aggregation procedure.

\subsection{Joint Sensitivity and Subspace Partition}
To compare joint perturbations on the scales defined by the state metric and observation noise, we define the whitened Jacobian and its complete right singular space:
\begin{equation}
A=\Sigma_y^{-1/2}JG^{-1/2},\qquad
A^\top A=V\diag(\sigma_1^2,\ldots,\sigma_m^2)V^\top.
\label{eq:whitened}
\end{equation}
Here $V=[v_1,\ldots,v_m]$ is orthogonal and $\sigma_1\geq\cdots\geq\sigma_m\geq0$. Numerical computations use the complete right singular basis. When $p<m$, all right-null-space directions must be retained, rather than using only the vectors returned by an economy-size SVD.

Given $0<\tau\leq1$ and $\sigma_1>0$, partition the directions into retained and below-threshold sets:
\begin{equation}
\begin{aligned}
\Rset&=\{i:\sigma_i/\sigma_1\geq\tau\},&
\Bset&=\{i:\sigma_i/\sigma_1<\tau\},\\
P_\tau&=\sum_{i\in\Rset}v_iv_i^\top,&
r_\tau&=|\Rset|,\qquad b_\tau=m-r_\tau.
\end{aligned}
\label{eq:partition}
\end{equation}
$P_\tau$ projects onto the retained subspace in local coordinates $z$. If $\sigma_1=0$, set $\Rset=\varnothing$, $\Bset=\{1,\ldots,m\}$, and $P_\tau=0$. Exact null-space directions satisfy $\sigma_i=0$, whereas $\Bset$ may also contain nonzero low-sensitivity directions. The relative threshold $\tau$ and numerical rank tolerance are recorded separately.

\subsection{Interpretation of Geometric Labels}
\begin{proposition}[Joint ambiguity]
Nonzero columns of a Jacobian do not imply that the joint map is injective. For example, the null space of $J=[1\;1]$ contains $(1,-1)^\top$. More generally, if $m>p$, then $\dim\ker J\geq m-p$. Coordinate-wise sensitivity therefore cannot replace joint-direction analysis.
\label{prop:joint}
\end{proposition}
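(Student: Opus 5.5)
The proposition makes three claims. The first is that nonzero columns do not imply injectivity, shown by the example. The second is the dimension bound $\dim\ker J\geq m-p$ when $m>p$. The third is the interpretive consequence. I would prove them in that order, using only elementary linear algebra; no earlier result of the paper is needed beyond the setup in \eqref{eq:observation}.

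\textbf{Example.} Take $J=[1\;1]$, so $m=2$ and $p=1$. Both columns equal $1\neq0$, so each coordinate perturbation changes the linearized observation. However, $J(1,-1)^\top=1-1=0$, so the nonzero direction $(1,-1)^\top$ lies in $\ker J$. Hence $J$ is not injective: $J(c_0+t(1,-1)^\top)=Jc_0$ for all $t$. This is exactly the $y=c_1+c_2$ example of the introduction.

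\textbf{Dimension bound.} For general $J\in\mathbb{R}^{p\times m}$, I would invoke rank--nullity:
\[
\dim\ker J = m-\rank J.
\]
Since $\rank J\leq\min(p,m)\leq p$, this gives $\dim\ker J\geq m-p$, which is positive when $m>p$.

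\textbf{Connection to the whitened geometry.} I would also note that $\ker A = G^{1/2}\ker J$, because $\Sigma_y^{-1/2}$ and $G^{-1/2}$ are invertible. Hence at least $m-p$ of the $\sigma_i$ in \eqref{eq:whitened} vanish. Since $\tau>0$, the corresponding $v_i$ satisfy $\sigma_i/\sigma_1=0<\tau$ and therefore lie in $\Bset$ (and if $\sigma_1=0$, all indices are in $\Bset$ by convention). This is also why the complete rather than economy SVD must be used.

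\textbf{Interpretive claim.} Coordinate-wise sensitivity only inspects $Je_j$, the columns of $J$. The example shows that every column can be nonzero while $\ker J\neq\{0\}$, so column information cannot detect a nontrivial kernel; detecting it requires the joint analysis.

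There is no real obstacle here. The only step requiring care is phrasing the interpretive claim precisely, namely as the statement that the column norms $\|Je_j\|$ do not determine $\ker J$.
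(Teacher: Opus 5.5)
Your proposal is correct and matches the paper's proof: both use the $J=[1\;1]$ example with $(1,-1)^\top\in\ker J$, then rank--nullity, $\dim\ker J=m-\rank J\geq m-p$. Your additional observation that $\ker A=G^{1/2}\ker J$, so that at least $m-p$ singular values vanish and those directions fall in $\Bset$, is also correct, though the paper does not include it in this proof.
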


\begin{proposition}[Indistinguishable states]
If $c_+,c_-\in\mathcal{C}$ satisfy $\mathcal{H}(c_+)=\mathcal{H}(c_-)$, the output of any deterministic estimator at this common observation has error at least $\tfrac12\|c_+-c_-\|_G$ for at least one of the two states.
\label{prop:indistinguishable}
\end{proposition}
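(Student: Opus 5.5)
The plan is a triangle-inequality argument in the $G$-norm. Let $\hat c:\mathbb{R}^p\to\mathbb{R}^m$ be any deterministic estimator and set $y^\ast=\mathcal{H}(c_+)=\mathcal{H}(c_-)$. Because the estimator is deterministic, its output at $y^\ast$ is a single point $\hat c(y^\ast)$, and this same point serves as the estimate whichever of $c_+$ or $c_-$ generated the observation. First, I record that $\|\cdot\|_G$ is a genuine norm because $G\succ0$: indeed $\|d\|_G=\|G^{1/2}d\|_2$, so it inherits the triangle inequality and symmetry from the Euclidean norm.

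Next, I apply the triangle inequality:
\begin{equation*}
\|c_+-c_-\|_G\le \|c_+-\hat c(y^\ast)\|_G+\|\hat c(y^\ast)-c_-\|_G .
\end{equation*}
The two terms on the right are the errors at the two states. Hence their maximum is at least half their sum, so $\max\{\|\hat c(y^\ast)-c_+\|_G,\|\hat c(y^\ast)-c_-\|_G\}\ge\tfrac12\|c_+-c_-\|_G$. This is exactly the claim that at least one of the two states incurs error at least $\tfrac12\|c_+-c_-\|_G$.

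I should make one interpretive remark. The statement concerns the noiseless common observation, or equivalently any realization where both states produce the same input. With additive noise $\eta$ independent of $c$, the observation laws under $c_+$ and $c_-$ coincide, so the same pointwise bound holds for every noise realization $y=y^\ast+\eta$. Taking expectations then gives the corresponding bound on the larger of the two expected errors, since $\mathbb{E}\max\ge\max\mathbb{E}$ fails but the sum bound $\mathbb{E}[e_++e_-]\ge\|c_+-c_-\|_G$ still yields it. I would mention this only as a remark.

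There is no real obstacle here. The only point needing care is stating precisely that determinism forces a single output at $y^\ast$, which is what makes the triangle inequality bite. A randomized estimator would need the expectation version instead.
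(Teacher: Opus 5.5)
Your proof is correct and matches the paper's own argument: determinism forces a single estimate at the common observation, and the triangle inequality in $\|\cdot\|_G$ then makes the larger of the two errors at least $\tfrac12\|c_+-c_-\|_G$. One small slip in your side remark: $\mathbb{E}\max\ge\max\mathbb{E}$ is actually true, just in the direction that does not help, and the summed bound $\mathbb{E}e_++\mathbb{E}e_-\ge\|c_+-c_-\|_G$ is what correctly gives the expected-error version.
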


Equation~\eqref{eq:partition} gives only relative local sensitivity labels; it does not establish the premise of Proposition~\ref{prop:indistinguishable} or global non-identifiability. Uniformly scaling the noise standard deviation leaves the partition unchanged, while absolute resolution also depends on perturbation amplitude and noise level. The choice of basis within a repeated-singular-value subspace must also be recorded, because directional errors and rankings may change under basis rotations. For example, $\mathcal{H}(c)=c^3$ has zero derivative at $c_0=0$ but remains injective. Finite perturbations $c_\pm=c_0\pm\delta G^{-1/2}v_i$ must belong to $\mathcal{C}$; a small nonzero observation difference supports only a local diagnosis. Proofs and finite-perturbation analysis are given in Appendix~\ref{app:theory}, and numerical sensitivity in Appendix~\ref{app:numerics}.

\section{Direction-Wise Selective Evaluation}
\label{sec:evaluation}
\subsection{Directional Predictions and Successful Recovery}
For a fixed configuration, model, random seed, and test split, project predictions $\widehat c_n$ and ground truth $c_n$ into the same right singular basis:
\[
q_n=V^\top G^{1/2}(c_n-c_0),\qquad
\widehat q_n=V^\top G^{1/2}(\widehat c_n-c_0).
\]
Define the directional signal scale, normalized error, and successful-recovery label as
\begin{equation}
\begin{aligned}
s_i^2&=\frac1N\sum_{n=1}^Nq_{ni}^2,\qquad
e_i=\frac{\sqrt{N^{-1}\sum_{n=1}^N(\widehat q_{ni}-q_{ni})^2}}{s_i},\\
h_i&=\ind\{i\in\Rset,\ e_i<\varepsilon\}.
\end{aligned}
\label{eq:recovery}
\end{equation}
When $s_i=0$, set $e_i=+\infty$ and $h_i=0$, and report the number of zero-signal directions. A label $h_i=1$ requires both retention and an error below tolerance; it is not determined by geometry alone. The choice $\varepsilon=1$ means outperforming the baseline that always predicts zero perturbation, not a worst-case error guarantee.

\subsection{Claims, Abstention, and Fixed-Budget Scores}
Let $a_i=1$ denote abstention on direction $i$ and $a_i=0$ a prediction claim. For $r_\tau,b_\tau>0$, report
\begin{equation}
\begin{aligned}
Q&=\frac1m\sum_{i=1}^m(1-a_i),&
R&=\frac1{r_\tau}\sum_{i=1}^m(1-a_i)h_i,\\
A_B&=\frac1{b_\tau}\sum_{i\in\Bset}a_i,&
H_B&=\sum_{i\in\Bset}(1-a_i).
\end{aligned}
\label{eq:fixed}
\end{equation}
$Q$ is claim coverage, $R$ the successfully recovered fraction of retained directions, $A_B$ the abstained fraction of below-threshold directions, and $H_B$ the number of below-threshold claims. These quantities describe output quantity, recovery capability, and agreement with the geometric partition. A ``claim'' means only that the prediction for that direction is accepted, not that identifiability is certified. The experiments use the soft composite score
\begin{equation}
\scrs=\max\!\left(0,1-\lambda\frac{H_B}{b_\tau}\right)
\frac{2RA_B}{R+A_B},\qquad\lambda=2.
\label{eq:scrs}
\end{equation}
The score is zero if $R+A_B=0$; if $b_\tau=0$, set $\scrs=R$; if $r_\tau=0$, set it to zero. Because Equation~\eqref{eq:scrs} applies a soft penalty to below-threshold claims, a positive score does not require $H_B=0$, so the components are also reported. Geometric gating directly sets $a_i=\ind\{i\in\Bset\}$, providing a reference for the recovery of a given predictor under the specified partition.

\subsection{Coverage Curves and Confidence Rankings}
Let $u_i$ denote directional confidence, with larger scores receiving higher claim priority. It must use the same directional basis as the predictions and must not be determined by test error $e_i$ or hit label $h_i$. Calibration error, interval width, and ensemble spread can provide ranking signals, each converted to confidence by a prespecified rule. We use a fixed direction ranking per configuration for calibration error and interval width.

Let $\pi$ sort $u_i$ in descending order. At each budget $j=0,\ldots,m$, claim the first $j$ directions and abstain from the remainder:
\begin{equation}
Q_j=\frac jm,\qquad
Y_j=\frac1m\sum_{\ell=1}^j h_{\pi(\ell)},\qquad
Z_j=\frac1{b_\tau}\sum_{\ell=j+1}^m\ind\{\pi(\ell)\in\Bset\}.
\label{eq:curves}
\end{equation}
$Y_j$ is recovery yield as a fraction of all directions, and $Z_j$ is the abstained fraction of below-threshold directions. When the denominators are nonzero, $Y_j=(r_\tau/m)R_j$ and $Z_j=A_{B,j}$. The two curves describe recovery yield and abstention separately. Under random ranking, their expectations are $Y(Q)=(k/m)Q$ and $Z(Q)=1-Q$, where $k=\sum_i h_i$. Integration uses piecewise-linear interpolation; tied groups contribute their expectation under a uniform random ordering.

For $0<k<m$, the normalized recovery-coverage area satisfies
\begin{equation}
\text{CRC-AUC}
=\frac{\int_0^1Y(Q)\,dQ-k/(2m)}{k(m-k)/(2m^2)}
=2\AUROC_h-1.
\label{eq:crc}
\end{equation}
Here $\AUROC_h$ treats successfully recovered directions as positives. This metric measures selection quality for a given set of prediction outcomes, rather than recovery quantity, and must therefore be reported with $k/m$ and the recovery curve. It is undefined when $k=0$ or $k=m$, although the curve and hit count are still reported.

To assess confidence against geometric labels alone, we use the classical pairwise-ranking interpretation of AUROC, with half credit for ties \citep{hanley}. Taking retained directions as positives and below-threshold directions as negatives, define the geometric AUROC when $r_\tau,b_\tau>0$:
\begin{equation}
\begin{aligned}
U_B&=\Pr(u_I>u_J)+\tfrac12\Pr(u_I=u_J),\\
I&\sim\Unif(\Rset),\qquad J\sim\Unif(\Bset),
\end{aligned}
\label{eq:geometric_auc}
\end{equation}
where $I$ and $J$ are drawn independently. The excess area of the below-threshold-abstention curve over the random reference satisfies
\begin{equation}
\begin{aligned}
D&=\int_0^1[Z(Q)-(1-Q)]\,dQ
=\frac{r_\tau}{m}\left(U_B-\frac12\right),\\
\frac{D}{D_{\mathrm{geom}}}&=2U_B-1,\qquad
D_{\mathrm{geom}}=\frac{r_\tau}{2m}.
\end{aligned}
\label{eq:excess}
\end{equation}
The pairwise interpretation in Equation~\eqref{eq:geometric_auc} is established knowledge. Our derivations in Appendix~\ref{app:auc} relate this classical AUROC interpretation to the specific coverage curves and normalizations defined here, yielding Equations~\eqref{eq:crc} and~\eqref{eq:excess}. An area and its corresponding AUROC express the same ranking information. An overall AUROC above chance does not ensure that $Z(Q)$ exceeds the random reference at every budget. Ranking by singular value gives $U_B=1$ as a geometric reference, not as evidence of learned uncertainty estimation.

\subsection{Pseudocode}
Algorithm~\ref{alg:protocol} first constructs geometric labels for each observation configuration, then evaluates each model, random seed, split, and selection rule separately. Test ground truth enters only the calculation of errors and hit labels, not confidence ranking. Interval widths are used only for empirical ranking analysis.

Each evaluation record stores identifiers for the observation and scoring specifications, the complete directional spectrum, basis $V$, directional errors and confidence scores, hit counts, and all curves, together with reasons for undefined metrics. Results for different models and seeds are computed separately before aggregation according to the specified rule. Reusing the same confidence vector does not provide additional geometric-ranking evidence.

\begin{algorithm}[t]
\caption{Direction-wise evaluation with UniPDE-Bench}
\label{alg:protocol}
\begin{minipage}{\linewidth}\small
\textbf{Input:} Observation specification $(\mathcal H,c_0,G,\Sigma_y)$; scoring specification $(\tau,\varepsilon)$ and numerical, tie-handling, and aggregation conventions; trained estimator $f$; test set $\{(y_n,c_n)\}_{n=1}^N$; confidence record $u$ with a directional-basis identifier.\par
\textbf{Output:} Directional records, fixed-budget metrics, coverage curves, and ranking scores.
\begin{enumerate}[label=\arabic*.,leftmargin=1.7em,itemsep=2pt,topsep=4pt]
\item Verify that $\mathcal H$ includes all history and auxiliary inputs used by $f$.
\item Compute $J=D\mathcal H(c_0)$ and $A=\Sigma_y^{-1/2}JG^{-1/2}$.
\item Compute the complete right singular basis $V$ of $A$, retaining all $m$ directions and their singular values.
\item Obtain $\Rset,\Bset,P_\tau$ from Equation~\eqref{eq:partition}; if $\sigma_1=0$, put all directions below threshold.
\item Compute $\widehat c_n=f(y_n)$ and project $c_n,\widehat c_n$ into $q_n,\widehat q_n$.
\item Compute $s_i,e_i,h_i$ using Equation~\eqref{eq:recovery}; if $s_i=0$, set $e_i=+\infty,h_i=0$.
\item Read $u_i$ matched to $V$ and freeze the confidence ranking $\pi$; do not update it using test errors.
\item \textbf{for} $j=0,\ldots,m$ \textbf{do}
\item \hspace*{1em}Set $a_{\pi(\ell)}=0$ for the first $j$ directions and $a_{\pi(\ell)}=1$ for the rest.
\item \hspace*{1em}Compute $Q_j,R_j,A_{B,j},H_{B,j},Y_j,Z_j$ using Equations~\eqref{eq:fixed} and~\eqref{eq:curves}.
\item \textbf{end for}; for tied groups, use expected curve contributions under uniform random ordering; integrate piecewise linearly.
\item Set $k=\sum_i h_i$; compute CRC-AUC if $0<k<m$, otherwise mark it undefined.
\item Compute $U_B,D,D/D_{\mathrm{geom}}$ if $r_\tau,b_\tau>0$; do not report $U_B$ if either class is absent.
\item Save specification identifiers, $V,\sigma,e,h,u,k,r_\tau,b_\tau$, curves, metrics, and reasons for undefined values.
\end{enumerate}
\textbf{Boundary conventions:} if $r_\tau=0$, set $R_j=0$; if $b_\tau=0$, $Z_j$ and $A_{B,j}$ are not applicable.
\end{minipage}
\end{algorithm}

\section{Experiments}
\label{sec:experiments}
We investigate three questions: which state directions joint observations constrain; whether low prediction error accompanies appropriate direction selection; and how confidence rankings behave at different claim coverages. We first analyze the local sensitivities of three reference operators, then use controlled inverse problems and model comparisons to distinguish prediction accuracy from observation geometry, and finally examine confidence rankings and the effect of the coverage budget.

\subsection{Experimental Setup}
\paragraph{Simulation tasks and configurations.}
We use a diffuse-domain battery-potential model (Battery), a low-P\'eclet-number advection--diffusion model (Melt-pool), and a high-P\'eclet-number transport model (Flooding). All three are interface-sensing simulation tasks. The interface is represented by a real Fourier expansion with $K=8$ and a constant term, giving $m=17$ coefficients. Reference diagnostics use a $64\times64$ grid and a relative sensitivity threshold $\tau=0.05$. Observation noise is normalized by channel scale at a strength of $1\%$, with an absolute floor.

Observation-geometry analysis uses the three reference configurations in Table~\ref{tab:geometry}; Battery includes two auxiliary channels. Neural-model evaluation uses five configurations: Battery M0, M2, and M8, Melt-pool M12, and Flooding M12. Observation specifications for the reference configurations and learning tasks are recorded separately. Direction-wise evaluation requires geometry to account for all inputs actually used by an estimator, including observation histories and auxiliary channels.

\paragraph{Estimators and repeated experiments.}
The controlled local inverse problems use ridge regression and truncated singular value decomposition (TSVD), with 1,600 training, 500 validation, and 800 test samples, repeated over five random seeds. Neural models comprise UniPDE, UniPDE-noalign, a GRU, a static MLP, FNO \citep{fno}, and DeepONet \citep{deeponet}. Each model family is trained on each configuration with five random seeds, yielding 150 training runs. Poseidon-T \citep{poseidon} adaptation and prediction-averaging experiments are reported separately in Appendix~\ref{app:poseidon}.

\paragraph{Metrics and comparisons.}
Prediction accuracy is measured by normalized root mean squared error (NRMSE). The controlled inverse problems and Poseidon-T evaluate interface-coefficient error, whereas the main comparison of the six neural model families evaluates field error. Selective performance is measured by $\scrs$, recovery-coverage area (CRC-AUC), and geometric AUROC, together with recovery-yield and below-threshold-abstention curves. Oracle gating directly uses geometric labels as a direction-selection reference, not as a learned confidence estimator. Results are computed separately for in-distribution (ID) and out-of-distribution (OOD) data. Table~\ref{tab:neural} first averages the five seeds within each model family and then takes a macro-average over the six families.

\subsection{Local Sensitivity of Joint Observations}
Coordinate-wise and joint-direction analyses give different observation diagnostics. In Table~\ref{tab:geometry}, coordinate-wise analysis detects no blind modes for Flooding, although its $13\times17$ Jacobian cannot have full column rank. Joint spectral analysis gives retained dimensions of 12, 9, and 12 for Battery, Melt-pool, and Flooding at $\tau=0.05$, with corresponding below-threshold dimensions of 5, 8, and 5 (Figure~\ref{fig:spectra}). Observable responses to independent changes in every coefficient therefore do not rule out ambiguities caused by joint coefficient variations.

\begin{table}[t]
\caption{Joint-Jacobian diagnostics for the three reference configurations ($m=17$, $\tau=0.05$, $64\times64$ grid). The below-threshold dimension is the state dimension minus the effective rank, not the exact null-space dimension. Coordinate-wise conclusions concern independent perturbations of individual coefficients.}
\label{tab:geometry}
\centering\small
\setlength{\tabcolsep}{3pt}
\begin{tabular*}{\linewidth}{@{\extracolsep{\fill}}lccccl@{}}
\toprule
Operator & $J$ size & \shortstack{Structural\\rank} & \shortstack{Effective\\rank} & \shortstack{Below\\threshold} & Coordinate-wise finding\\
\midrule
Battery & $14\times17$ & 13 & 12 & 5 & High-frequency gain decay\\
Melt-pool & $13\times17$ & 13 & 9 & 8 & High-frequency gain decay\\
Flooding & $13\times17$ & 13 & 12 & 5 & ``No blind modes''\\
\bottomrule
\end{tabular*}
\end{table}

For Flooding, 13 observations and 17 state coefficients imply at least four exact null-space directions in the local linear map. The thirteenth normalized singular value is 0.039, below the threshold 0.05, adding a low-sensitivity direction. The five below-threshold directions thus include both exact null-space and nonzero weak-sensitivity components. This partition describes local observation sensitivity rather than directly determining global identifiability of the nonlinear system.

\begin{figure}[t]
\centering
\includegraphics[width=\linewidth]{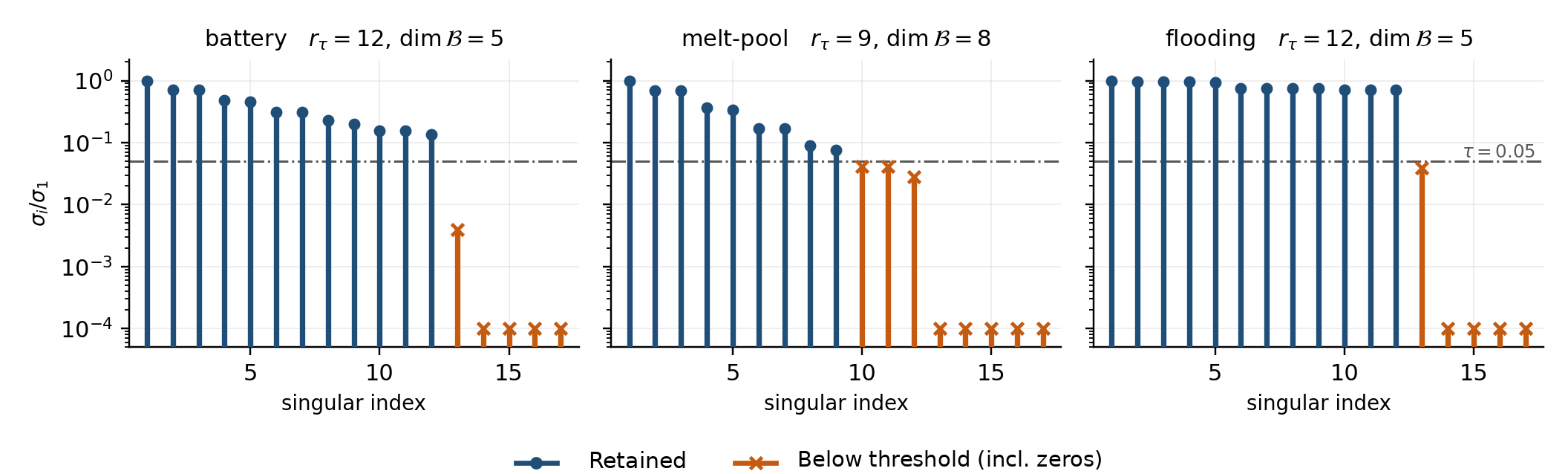}
\caption{Normalized joint singular-value spectra of the three reference operators. The dashed line marks the relative threshold $\tau=0.05$; retained dimensions are 12, 9, and 12, respectively. Padded zeros and values below the display range are placed at the lower axis limit; their display positions are not used to determine exact rank.}
\label{fig:spectra}
\end{figure}

Discretization affects the threshold partition differently across operators. On a $48\times48$ grid, their effective ranks are 12, 7, and 12; on a $64\times64$ grid, Melt-pool increases to 9 while the other two remain unchanged. This result calls for separate numerical-sensitivity checks near the threshold: a small matrix-norm change alone does not establish subspace convergence. Complete spectra and finite-difference diagnostics appear in Appendix~\ref{app:numerics}.

\subsection{Low-Error Prediction under Correlated Priors}
To distinguish prior-supported prediction from information supplied by observations, we construct controlled local inverse problems. Coefficients are sampled from a smooth correlated prior with a small full-rank residual component. Noisy observations are generated using the local Jacobian, and ridge regression is fitted. This construction retains joint ambiguities in the observation map while allowing the estimator to exploit statistical correlations between coefficients.

\begin{table}[t]
\caption{Prediction errors and selective scores in the controlled local inverse problems. Coefficient NRMSE is the mean $\pm$ standard deviation over five random seeds. Oracle gating uses geometric labels; TSVD + Oracle combines a spectrally truncated estimator with this gating rule.}
\label{tab:controlled}
\centering\small\setlength{\tabcolsep}{3.5pt}
\begin{tabular}{@{}lccccc@{}}
\toprule
Operator & \shortstack{Ridge OOD\\NRMSE} & \shortstack{No abstention\\$\scrs$} & \shortstack{Validation\\gate $\scrs$} & \shortstack{Oracle\\gate $\scrs$} & \shortstack{TSVD +\\Oracle $\scrs$}\\
\midrule
Battery & $0.669\pm0.007$ & 0.00 & 0.00 & 1.00 & 0.40\\
Melt-pool & $0.140\pm0.004$ & 0.00 & 0.00 & 1.00 & 0.875\\
Flooding & $0.054\pm0.001$ & 0.00 & 0.00 & 1.00 & 1.00\\
\bottomrule
\end{tabular}
\end{table}

Table~\ref{tab:controlled} reports ridge-regression OOD coefficient NRMSEs of $0.669\pm0.007$, $0.140\pm0.004$, and $0.054\pm0.001$ on Battery, Melt-pool, and Flooding. Flooding has the lowest prediction error, yet its local observation map still has five below-threshold directions. Meanwhile, both no abstention and validation-error gating yield zero $\scrs$, whereas geometric-reference gating scores 1.00 (Figure~\ref{fig:controlled}). Low distribution-averaged prediction error therefore does not automatically translate into direction selection aligned with observation geometry.

TSVD + Oracle scores 0.40, 0.875, and 1.00 on the three tasks. Thus, different estimators can retain different recovery performance even under geometric-reference gating. Geometric labels identify directions of relatively high local sensitivity, while prediction errors determine whether those directions meet the recovery tolerance. These aspects require separate evaluation.

\begin{figure}[t]
\centering
\includegraphics[width=\linewidth]{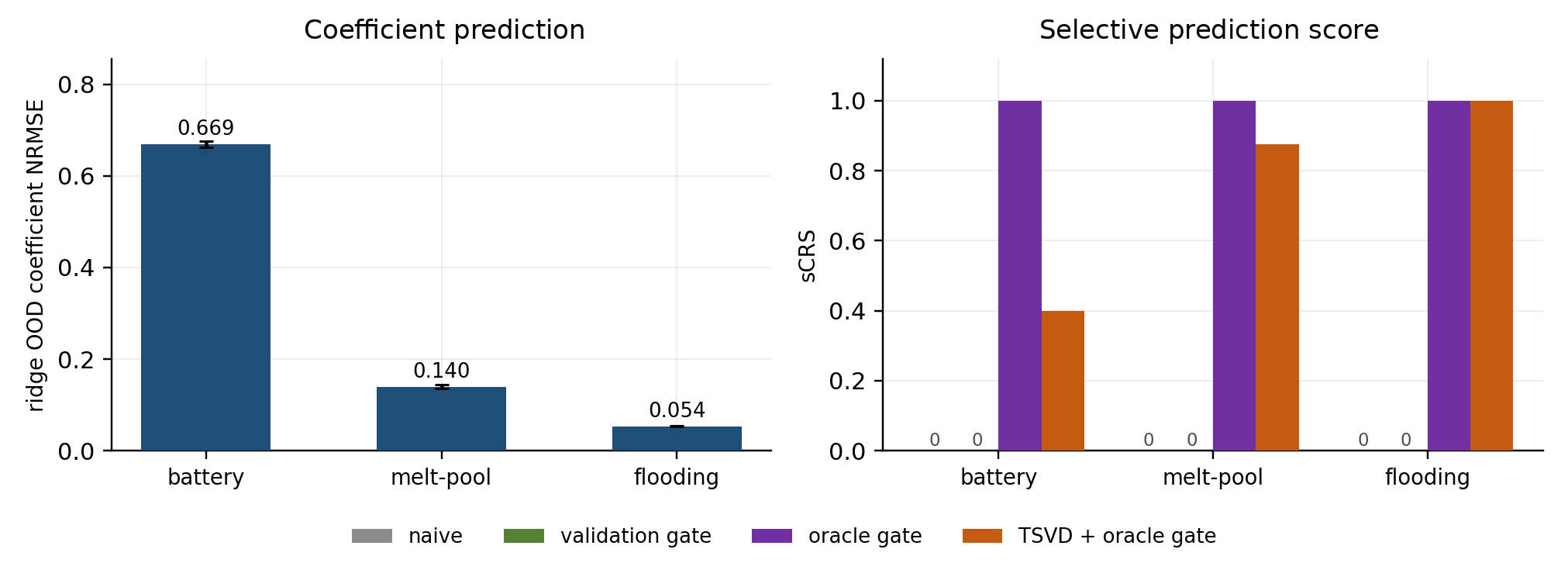}
\caption{Prediction accuracy and selective scores in the controlled inverse problems. Left: ridge-regression OOD coefficient NRMSE, with error bars showing standard deviations over five seeds. Right: $\scrs$ for no abstention, validation-error gating, Oracle gating, and TSVD + Oracle.}
\label{fig:controlled}
\end{figure}

\subsection{ID and OOD Performance of Neural Estimators}
The six neural model families further show that field-reconstruction accuracy and direction-wise selective scores are not interchangeable. Under ID conditions, macro-averaged field NRMSEs for the three Battery configurations are similar---0.048, 0.048, and 0.047---but validation-gated $\scrs$ values are 0.465, 0.345, and 0.049, while Oracle-gated scores are 1.000, 0.800, and 0.228 (Table~\ref{tab:neural}). Similar aggregate reconstruction errors do not imply similar directional recovery or selection.

Under distribution shift, macro-averaged field NRMSE rises from 0.057 to 0.306. Most of the change occurs in Melt-pool M12 and Flooding M12, where errors rise from 0.072 and 0.071 to 0.530 and 0.483, respectively, and both gated $\scrs$ values fall to zero. These evaluation records contain no hits meeting the directional recovery tolerance, indicating that changing the gating rule alone cannot compensate for inadequate recovery.

Figure~\ref{fig:neural} compares gating for 30 model--configuration combinations. Under ID conditions, Oracle gating scores higher than validation gating in all 30 combinations; under OOD conditions, it does so in 18. The remaining 12 combinations are Melt-pool M12 and Flooding M12, for which both scores are zero. For example, FNO on Flooding M12 under ID conditions achieves field NRMSE $0.039\pm0.002$, yet validation-gated $\scrs$ is $0.179\pm0.183$, compared with $0.928\pm0.026$ under Oracle gating. Even accurate field prediction can therefore accompany validation-error-based direction selection that departs from the geometric reference.

\begin{table}[t]
\caption{Field-reconstruction errors and gating scores for six neural model families. Each configuration first averages the five seeds within each model family, then macro-averages over the six families. Lower NRMSE is better; the two $\scrs$ columns correspond to validation-error and geometric-reference gating.}
\label{tab:neural}
\centering\small
\begin{tabular}{@{}llccc@{}}
\toprule
Split & Configuration & Field NRMSE & \shortstack{Validation-gated\\$\scrs$} & \shortstack{Oracle-gated\\$\scrs$}\\
\midrule
ID & Battery M0 & 0.048 & 0.465 & 1.000\\
ID & Battery M2 & 0.048 & 0.345 & 0.800\\
ID & Battery M8 & 0.047 & 0.049 & 0.228\\
ID & Melt-pool M12 & 0.072 & 0.559 & 0.668\\
ID & Flooding M12 & 0.071 & 0.281 & 0.767\\
ID & Macro-average & 0.057 & 0.340 & 0.692\\
\midrule
OOD & Battery M0 & 0.168 & 0.438 & 0.900\\
OOD & Battery M2 & 0.170 & 0.345 & 0.800\\
OOD & Battery M8 & 0.180 & 0.050 & 0.244\\
OOD & Melt-pool M12 & 0.530 & 0.000 & 0.000\\
OOD & Flooding M12 & 0.483 & 0.000 & 0.000\\
OOD & Macro-average & 0.306 & 0.166 & 0.389\\
\bottomrule
\end{tabular}
\end{table}

\begin{figure}[t]
\centering
\includegraphics[width=\linewidth]{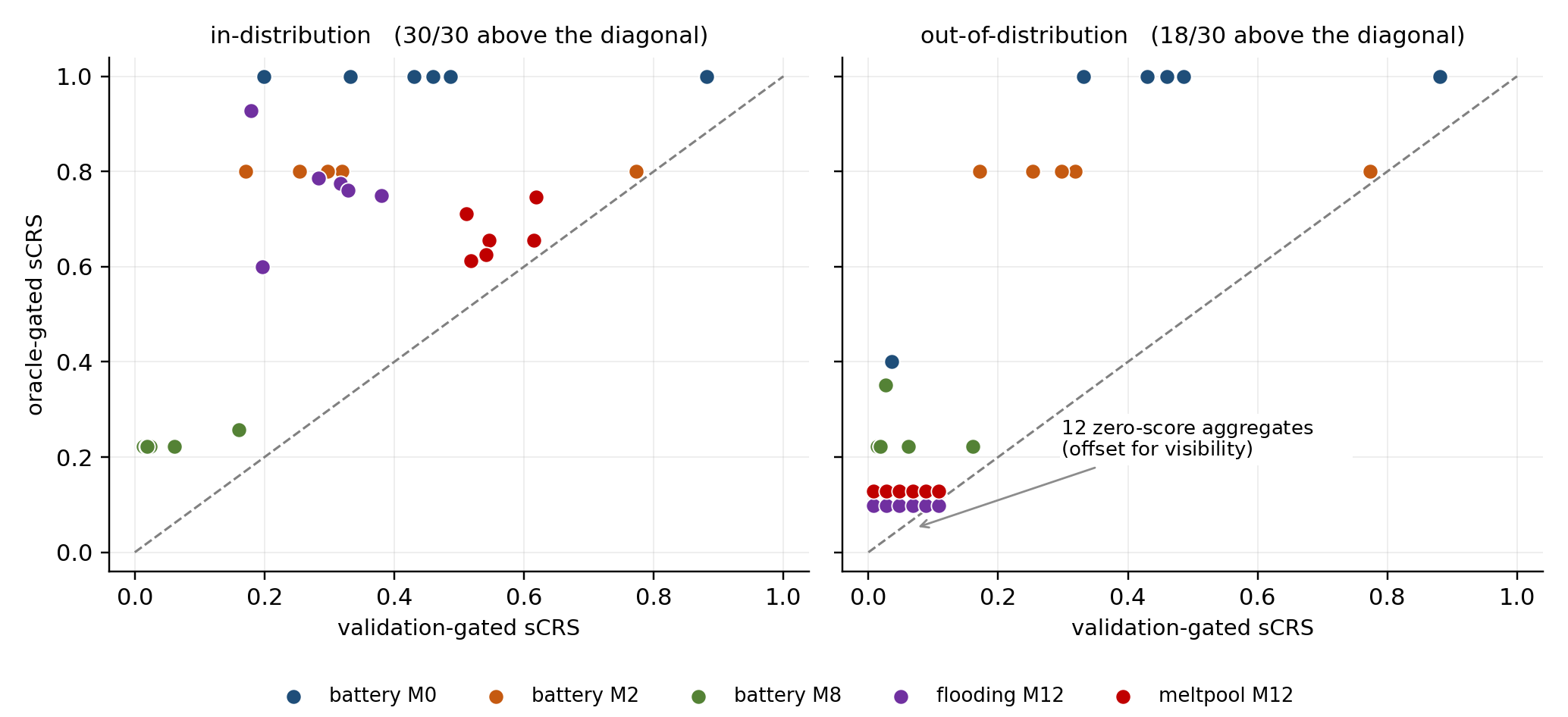}
\caption{Validation-error-gated versus Oracle-gated $\scrs$. Each point is the five-seed mean for one model--configuration combination; the dashed line indicates equal scores. The 12 zero-score points in the right panel are offset for visibility; their actual coordinates are all $(0,0)$.}
\label{fig:neural}
\end{figure}

Prediction averaging can reduce coefficient error without providing an effective abstention rule. Averaging five Poseidon-T models across the five configurations reduces macro-averaged OOD coefficient NRMSE from 0.164 to 0.139 (Appendix~\ref{app:poseidon}). This experiment evaluates prediction averaging, not direction selection based on ensemble spread. Its coefficient errors are also not directly comparable to the field errors in Table~\ref{tab:neural}.

\subsection{Agreement between Confidence Rankings and Observation Geometry}
We compare three confidence signals---calibration error, conformal interval width, and ensemble spread---prioritizing smaller values for each. No abstention, the geometric partition, and random ranking serve as references. Calibration errors and interval widths are obtained from calibration data and induce a fixed direction ranking for each configuration, without updating for individual test observations. This section evaluates interval widths as empirical ranking signals and does not test the statistical coverage of conformal intervals.

Table~\ref{tab:rules} summarizes 60 model--configuration--split combinations, where ``passing'' requires zero below-threshold claims and positive $\scrs$. The geometric partition passes in 47 combinations. Calibration-error selection passes in only 9 and makes below-threshold claims in 46, despite an average $\scrs$ of 0.375. A positive soft-penalty score therefore cannot replace direct inspection of directional claims. Ensemble spread and conformal interval width pass no combinations at the fixed operating point, but their ranking scores exceed the random references. Performance at a fixed operating point and ranking quality across budgets describe different properties.

\begin{table}[t]
\caption{Fixed-operating-point scores and ranking performance of direction-selection rules. Counts use a denominator of 60 (5 configurations $\times$ 6 model families $\times$ 2 splits); passing requires zero overclaiming and $\scrs>0$. Ranking columns report means and dispersion, not confidence intervals.}
\label{tab:rules}
\centering\small\setlength{\tabcolsep}{3pt}
\begin{tabular}{@{}lccccc@{}}
\toprule
Rule & $\scrs$ & \shortstack{Passing\\tasks} & \shortstack{Overclaiming\\tasks} & CRC-AUC & \shortstack{Geometric\\AUROC}\\
\midrule
No abstention & 0.000 & 0/60 & 60/60 & $0.000\pm0.000$ & $0.500\pm0.000$\\
Calibration error & 0.375 & 9/60 & 46/60 & $0.821\pm0.176$ & $0.649\pm0.139$\\
Ensemble spread & 0.000 & 0/60 & 0/60 & $0.758\pm0.191$ & $0.640\pm0.184$\\
Conformal width & 0.000 & 0/60 & 0/60 & $0.753\pm0.196$ & $0.742\pm0.161$\\
Geometric partition & 0.545 & 47/60 & 0/60 & $0.618\pm0.095$ & $1.000\pm0.000$\\
Random ranking & --- & --- & --- & 0.000 & 0.500\\
\bottomrule
\end{tabular}
\end{table}

Within the respective aggregation scopes of Table~\ref{tab:rules}, calibration error and conformal interval width yield CRC-AUC values of 0.821 and 0.753, and geometric AUROCs of 0.649 and 0.742. CRC-AUC treats successfully recovered directions as positives, whereas geometric AUROC treats retained directions as positives. They therefore evaluate rankings of successful recovery and geometric labels, respectively, and their values should not be compared directly.

Configuration-wise analysis shows that the agreement between conformal interval widths and geometric labels varies across configurations (Table~\ref{tab:geometric_auc}). Geometric AUROC is 0.909 for Battery M2, 0.868 for Melt-pool M12, 0.769 for Battery M8, and 0.589 for Flooding M12. The last value is only slightly above the random reference of 0.5, indicating a weaker correspondence between high-confidence and high-sensitivity directions.

\begin{table}[t]
\caption{Geometric AUROC for rankings by conformal interval width. Retained directions are positives; larger AUROC means they tend to receive higher confidence. The last column is the normalized excess area from Equation~\eqref{eq:excess}, which contains the same ranking information as AUROC.}
\label{tab:geometric_auc}
\centering\small
\begin{tabular}{@{}lcccc@{}}
\toprule
Configuration & $r_\tau$ & $b_\tau$ & $U_B$ & $2U_B-1$\\
\midrule
Battery M2 & 3 & 14 & 0.909 & 0.818\\
Battery M8 & 8 & 9 & 0.769 & 0.538\\
Melt-pool M12 & 9 & 8 & 0.868 & 0.736\\
Flooding M12 & 12 & 5 & 0.589 & 0.178\\
\bottomrule
\end{tabular}
\end{table}

\subsection{Recovery and Abstention across Claim Coverages}
\label{subsec:coverage}
An overall ranking score need not reflect behavior at a particular claim budget. We sweep claim coverage for all five configurations and examine recovery yield $Y$, the number of successfully recovered claimed directions divided by the total number of directions, and $Z$, the abstained fraction of below-threshold directions. Table~\ref{tab:areas} summarizes the excess abstention area $D$ over the random reference; Figures~\ref{fig:recovery} and~\ref{fig:abstention} show the two complete curves.

\begin{table}[t]
\caption{Excess area $D$ of below-threshold-abstention curves under ID conditions. Retained/below-threshold dimensions correspond to each learning configuration; hits count retained directions meeting the recovery tolerance. The last column is conformal-width $D$ divided by geometric-reference $D$.}
\label{tab:areas}
\centering\small\setlength{\tabcolsep}{3pt}
\begin{tabular}{@{}lccccccc@{}}
\toprule
Configuration & \shortstack{Retained/\\below} & Hits & \shortstack{Conformal\\$D$} & \shortstack{Ensemble\\$D$} & \shortstack{Calibration\\$D$} & \shortstack{Geometric\\$D$} & Ratio\\
\midrule
Battery M0 & 1/16 & 1 & 0.029 & 0.029 & 0.029 & 0.029 & 100\%\\
Battery M2 & 3/14 & 2 & 0.072 & 0.074 & 0.053 & 0.088 & 82\%\\
Battery M8 & 8/9 & 1 & 0.126 & 0.121 & 0.050 & 0.235 & 54\%\\
Melt-pool & 9/8 & 4--6 & 0.195 & 0.178 & 0.141 & 0.265 & 74\%\\
Flooding & 12/5 & 7--11 & 0.063 & 0.047 & 0.053 & 0.353 & 18\%\\
\bottomrule
\end{tabular}
\end{table}

Conformal interval width has positive excess area in all five configurations, but its gap from the geometric reference varies. Battery M0 has excess area 0.029, equal to the geometric reference; this remains a valid ranking comparison between one retained and 16 below-threshold directions. Flooding has excess area 0.063, only $18\%$ of the geometric reference of 0.353. The corresponding ratios for Battery M8 and Melt-pool are $54\%$ and $74\%$, showing that the ratio cannot be explained solely by the fraction of below-threshold directions.

\begin{figure}[t]
\centering
\includegraphics[width=\linewidth]{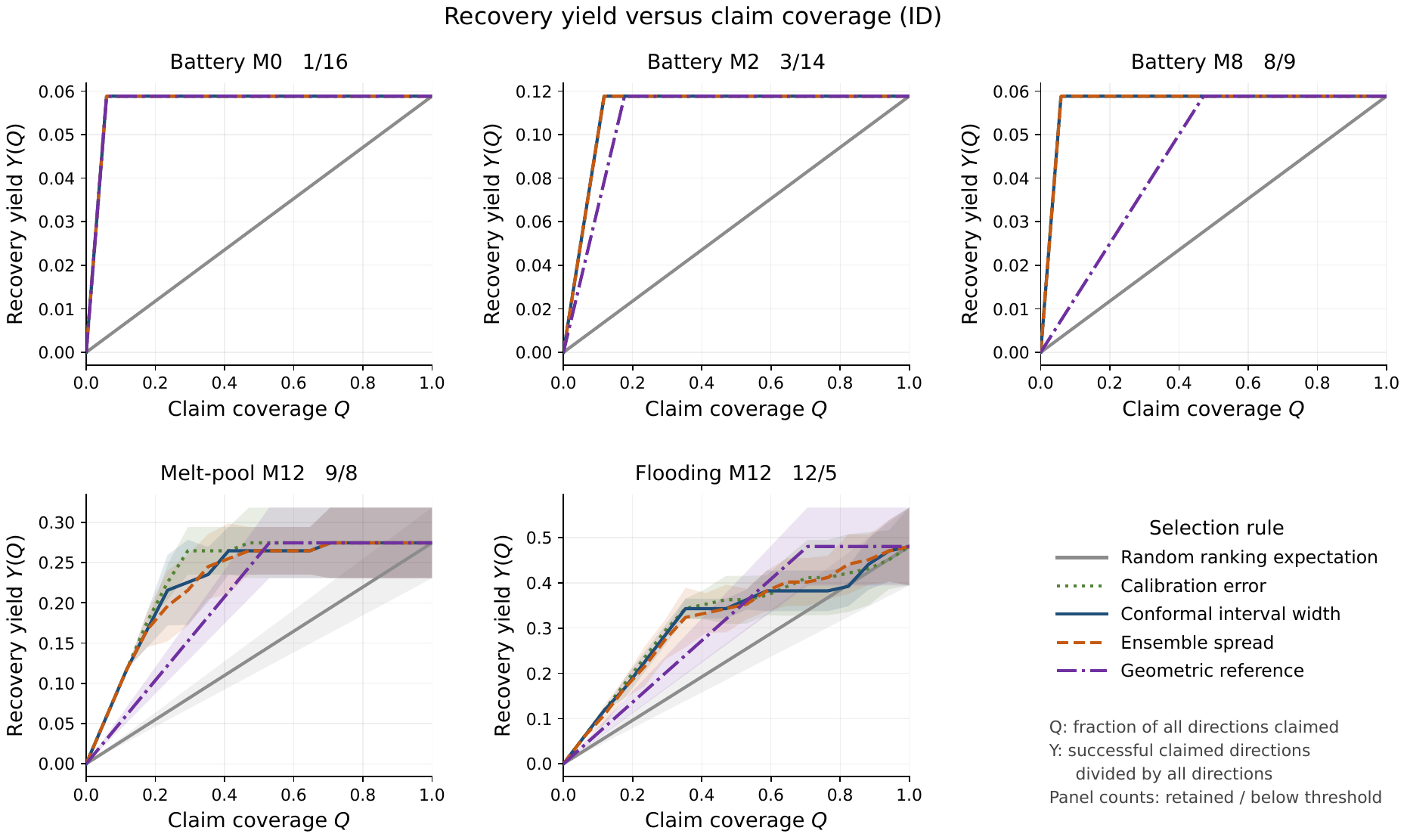}
\caption{Recovery yield $Y(Q)$ versus claim coverage $Q$ under ID conditions. The horizontal axis is the fraction of all directions claimed; the vertical axis is the number of successfully recovered claimed directions divided by all directions. Curves are means over six model families. Shading shows one standard deviation across model families, not a random-seed confidence interval. The gray straight line is the expected random-ranking curve under constant confidence, $Y(Q)=(k/m)Q$. Panel titles give retained/below-threshold dimensions; vertical scales differ across panels.}
\label{fig:recovery}
\end{figure}

\begin{figure}[t]
\centering
\includegraphics[width=\linewidth]{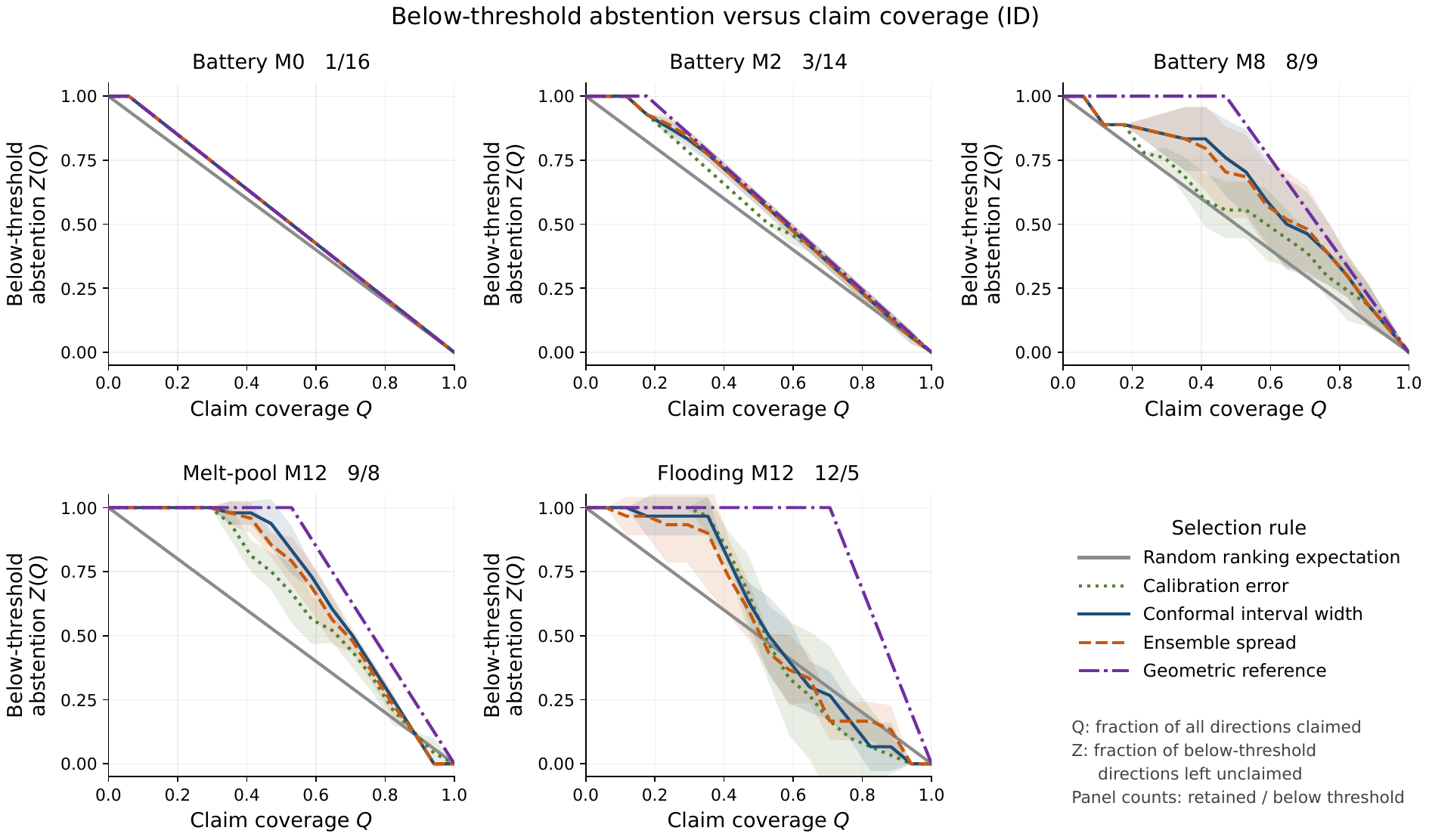}
\caption{Below-threshold abstention $Z(Q)$ versus claim coverage $Q$ under ID conditions. The horizontal axis is the fraction of all directions claimed; the vertical axis is the fraction of below-threshold directions left unclaimed. Curves and shading use the same aggregation as Figure~\ref{fig:recovery}. The gray line is the random reference $1-Q$. The geometric reference abstains from all below-threshold directions while the claim budget does not exceed the retained dimension. Panel titles give retained/below-threshold dimensions.}
\label{fig:abstention}
\end{figure}

Flooding M12 illustrates the distinction between overall ranking and behavior at particular budgets. Despite a geometric AUROC of 0.589, conformal-width ranking has mean below-threshold abstention below the random reference at all seven tested budgets from $Q=10/17$ to $16/17$ (Figure~\ref{fig:highcoverage}). At $Q=14/17\approx0.824$, this fraction is 0.067, compared with the random expectation $3/17\approx0.176$; at $Q=1$, both are zero. An overall AUROC above chance therefore does not guarantee higher below-threshold abstention at every coverage budget.

\begin{figure}[t]
\centering
\includegraphics[width=\linewidth]{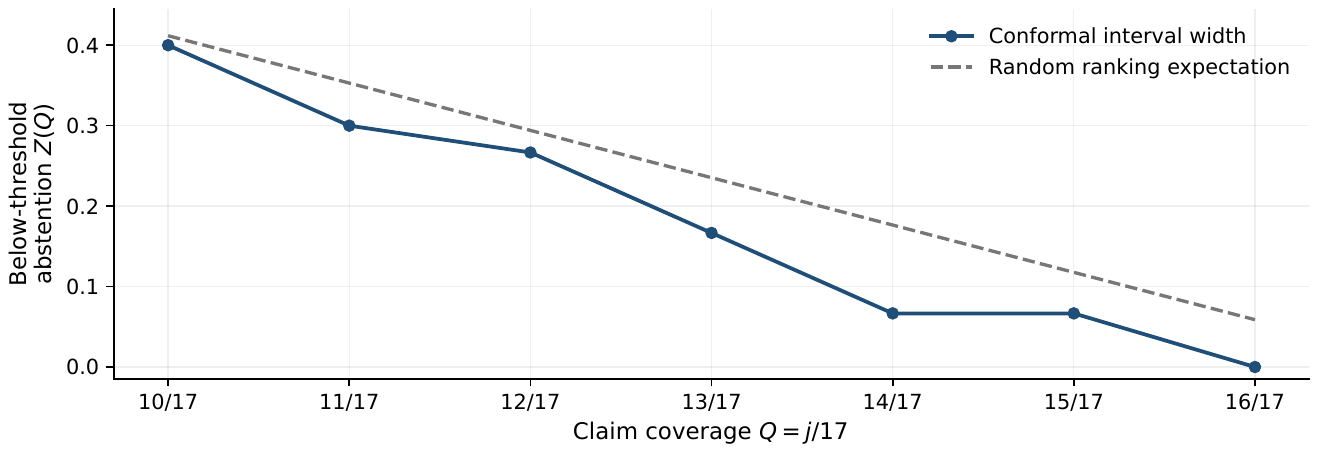}
\caption{Below-threshold abstention for Flooding M12 at seven high claim budgets. Points are means over six model families; the dashed line is the random expectation $1-Q$. Connecting lines guide the eye.}
\label{fig:highcoverage}
\end{figure}

One possible explanation is that prior correlations make some low-sensitivity directions highly predictable in distribution, leading to narrower intervals. The controlled inverse problems support the coexistence of prior-supported prediction and observation ambiguity, but do not establish whether this mechanism explains all ranking differences in the neural models. The current results support jointly reporting prediction recovery, geometric ranking, and budget-dependent abstention; they do not demonstrate failure of conformal coverage guarantees.

\FloatBarrier
\section{Conclusion}
\label{sec:conclusion}
We introduced UniPDE-Bench to evaluate local observation sensitivity, prediction recovery, and confidence rankings separately in a common directional basis, and established the correspondence between coverage-curve areas and AUROC. Results from three simulation operators and five sensing configurations show that low prediction error can coexist with a nontrivial below-threshold subspace, and that aggregate reconstruction accuracy cannot replace direction-selection assessment. The Flooding budget sweep further shows that an overall AUROC above chance can coexist with below-random below-threshold abstention at some claim budgets. Evaluation of partially observed PDE state estimation should therefore jointly report observation geometry, recovery yield, and budget-dependent abstention, rather than relying only on aggregate error or a single ranking score.
\label{maintext:end}

\section*{Reproducibility Statement}
Appendices~\ref{app:theory} and~\ref{app:auc} provide theoretical analysis and metric derivations; Appendices~\ref{app:numerics} and~\ref{app:additional} provide numerical diagnostics and supplementary experiments.

\section*{AI Use Statement}
Generative AI tools were used for manuscript review, Chinese-to-English translation, LaTeX preparation, and revisions to figure labels and captions.

\ifdefined\namedversion
\section*{Acknowledgements}
This work was supported by the National Natural Science Foundation of China (Grant Nos.\ 92267205, 92067205, 92367301, and 92267301), the Natural Science Foundation of Liaoning Province (Grant No.\ 2024-MSBA-83), the State Key Laboratory of Robotics of China (Grant No.\ 2023-Z15), and the National Program for Funded Postdoctoral Researchers (Grant No.\ GZB20230805).
\fi

\clearpage
\appendix
\section{Additional Theory for Local Observation Geometry}
\label{app:theory}
\subsection{Joint Ambiguity and Indistinguishability}
\paragraph{Proof of Proposition~\ref{prop:joint}.}
The two columns of $J=[1\;1]$ are nonzero, yet $(1,-1)^\top$ lies in its null space. More generally, the rank--nullity theorem gives $\dim\ker J=m-\rank(J)\geq m-p$. Nonzero coordinate-wise responses therefore do not guarantee that the joint linear map is injective.

\paragraph{Proof of Proposition~\ref{prop:indistinguishable}.}
When two states produce the same observation, any deterministic estimator returns the same estimate $\widehat c$. The triangle inequality gives
\[
\|c_+-c_-\|_G\leq\|c_+-\widehat c\|_G+\|\widehat c-c_-\|_G.
\]
At least one error is therefore no smaller than half the distance between the states. The result requires identical observations under the observation map itself, not merely a nontrivial Jacobian null space.

\begin{proposition}[Indistinguishability under bounded noise]
Suppose admissible noise satisfies $\|\Sigma_y^{-1/2}\eta\|_2\leq\beta$. If $c_+,c_-\in\mathcal C$ and
\[
\|\Sigma_y^{-1/2}(\mathcal H(c_+)-\mathcal H(c_-))\|_2\leq2\beta,
\]
then there are admissible noises associated with the two states that make their observations identical. The worst-case error of any deterministic estimator over these state--noise combinations is at least $\tfrac12\|c_+-c_-\|_G$.
\label{prop:bounded}
\end{proposition}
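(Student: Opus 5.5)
The plan is to build an explicit common observation and then reduce to the argument already used for Proposition~\ref{prop:indistinguishable}. Write $\Delta=\mathcal H(c_+)-\mathcal H(c_-)$ and work in whitened coordinates $w=\Sigma_y^{-1/2}y$. The natural choice is the midpoint observation
\[
y^\star=\tfrac12\bigl(\mathcal H(c_+)+\mathcal H(c_-)\bigr).
\]
The corresponding noises are $\eta_+=y^\star-\mathcal H(c_+)=-\tfrac12\Delta$ and $\eta_-=y^\star-\mathcal H(c_-)=\tfrac12\Delta$. By construction $\mathcal H(c_+)+\eta_+=\mathcal H(c_-)+\eta_-=y^\star$.

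The first step is to check admissibility. We have $\|\Sigma_y^{-1/2}\eta_\pm\|_2=\tfrac12\|\Sigma_y^{-1/2}\Delta\|_2\leq\beta$ by the hypothesis. Both noises therefore lie in the admissible set. This is the only point where the factor $2\beta$ enters, and the symmetric split is what makes it exactly sufficient.

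The second step handles the estimator. A deterministic estimator $f$ receives the same input $y^\star$ in both state--noise combinations, so it returns a single $\widehat c=f(y^\star)$. The triangle inequality in the $G$-norm, exactly as in the proof of Proposition~\ref{prop:indistinguishable}, gives
\[
\|c_+-c_-\|_G\leq\|c_+-\widehat c\|_G+\|\widehat c-c_-\|_G.
\]
Hence
\[
\max\bigl(\|c_+-\widehat c\|_G,\|c_--\widehat c\|_G\bigr)\geq\tfrac12\|c_+-c_-\|_G.
\]
The worst case over the two admissible combinations $(c_+,\eta_+)$ and $(c_-,\eta_-)$ is at least this large. The worst case over the full admissible set is then at least this large too.

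There is no real obstacle here. The only care needed is interpretive: "worst-case error" must mean the supremum over admissible state--noise pairs. Also, the estimator is not assumed to know which state generated $y^\star$, and no structure on $\mathcal C$ beyond membership of $c_\pm$ is used. I would also note that the whitening is used only to express the noise bound, and that the argument works for any norm on the whitened observation space.
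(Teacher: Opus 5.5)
Your proposal is correct and matches the paper's proof: it also takes the midpoint observation $y_*=(\mathcal H(c_+)+\mathcal H(c_-))/2$, notes that each state needs whitened noise of norm $\tfrac12\|\Sigma_y^{-1/2}(\mathcal H(c_+)-\mathcal H(c_-))\|_2\leq\beta$, and then applies the triangle-inequality argument of Proposition~\ref{prop:indistinguishable}. Your explicit noises $\eta_\pm=\mp\tfrac12(\mathcal H(c_+)-\mathcal H(c_-))$ simply spell out what the paper leaves implicit.
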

\begin{proof}
Choose the common observation $y_*=(\mathcal H(c_+)+\mathcal H(c_-))/2$. The whitened norm of the noise required for either state is at most $\beta$, so the triangle-inequality argument in Proposition~\ref{prop:indistinguishable} applies directly. A noise covariance does not itself imply bounded noise; resolution under unbounded noise, such as Gaussian noise, requires a probabilistic criterion.
\end{proof}

\subsection{Finite Perturbations and Local Linearization}
Suppose the Jacobian of the whitened observation map
\[
\widetilde{\mathcal H}(z)=\Sigma_y^{-1/2}\mathcal H(c_0+G^{-1/2}z)
\]
is $L$-Lipschitz on the relevant line segments. The linearization remainder then satisfies
\[
\|\widetilde{\mathcal H}(z)-\widetilde{\mathcal H}(0)-Az\|_2
\leq\frac L2\|z\|_2^2.
\]
Taking $z_\pm=\pm\delta v_i$ along a right singular direction gives
\[
\|\widetilde{\mathcal H}(z_+)-\widetilde{\mathcal H}(z_-)\|_2
\leq2\delta\sigma_i+L\delta^2.
\]
This bound requires admissible perturbed states and control of the Lipschitz constant and perturbation amplitude. We do not estimate these bounds here; the finite-perturbation results are used only as local sensitivity diagnostics.

\subsection{Observation Histories and Statistical Coverage}
\paragraph{Observation histories.}
With an unchanged state parameterization, if $O_{T+1}$ is obtained by appending observation rows to $O_T$, then $\ker O_{T+1}\subseteq\ker O_T$, so the exact rank cannot decrease. Relative effective rank, however, need not be monotone. At $\tau=0.05$, $A_0=\diag(1,0.1)$ has effective rank 2; appending the row $(100,0)$ changes its singular values to $\sqrt{10001}$ and $0.1$, reducing the effective rank to 1. This change results from normalization by the largest singular value, not from a loss of observation information.

\paragraph{Exchangeability.}
For independently and identically distributed state samples $(c_1,c_2)$, let $y=c_1$. The second coordinate is not identifiable from the observation over a two-dimensional admissible set, while the observation--target samples remain exchangeable. Marginal coverage guarantees therefore do not require the observation map to be injective, although distribution shift can invalidate these guarantees.

\section{Coverage-Curve Areas and AUROC}
\label{app:auc}
The derivations below use the classical equivalence between AUROC and the probability of correctly ordering a positive--negative pair, counting a tie as one half \citep{hanley}. They establish its relationship to the recovery-yield and below-threshold-abstention areas defined in this paper.

\subsection{Recovery-Coverage Area}
Let $t_1,\ldots,t_k$ be the ranks of the $k$ successfully recovered directions in descending confidence order, starting at 1. Each included hit increases recovery yield by $1/m$. Integrating the piecewise-linear curve in Equation~\eqref{eq:curves} gives
\[
\int_0^1Y(Q)\,dQ=\frac1{m^2}\sum_{a=1}^k\left(m-t_a+\frac12\right).
\]
The expected area under random ranking is $k/(2m)$. When all hits rank first, the area is $k/m-k^2/(2m^2)$. Let $L_h$ count direction pairs in which a non-hit precedes a hit. For $0<k<m$,
\[
\AUROC_h=1-\frac{L_h}{k(m-k)},\qquad
\int_0^1Y(Q)\,dQ=\frac km-\frac{k^2}{2m^2}-\frac{L_h}{m^2}.
\]
Substituting into Equation~\eqref{eq:crc} yields the linear relationship between normalized recovery area and the AUROC of hit labels. This score describes ranking quality rather than recovery quantity.

\subsection{Below-Threshold-Abstention Area}
\label{app:abstention_area}
Write $r=r_\tau$ and $b=b_\tau$, with both classes nonempty. A below-threshold direction at rank $t$ contributes $(t-\tfrac12)/(mb)$ to the area under the abstention curve. Summing its pairwise ordering relative to retained directions gives
\[
\int_0^1Z(Q)\,dQ=\frac b{2m}+\frac rmU_B.
\]
Subtracting the random-reference area $1/2$ yields $D=(r/m)(U_B-1/2)$. The geometric reference has $U_B=1$, so $D_{\mathrm{geom}}=r/(2m)$, proving Equation~\eqref{eq:excess}. These identities compare areas over the full curve; even when $U_B>1/2$, some budgets may satisfy $Z(Q)<1-Q$.

\subsection{Ties and Boundary Cases}
Taking the expectation over uniform random orderings within each tied group preserves the area identities above. The normalization denominator for recovery area is zero only when $k=0$ or $k=m$; one or two hits remain scorable within the nondegenerate range. Geometric AUROC requires both direction classes to be nonempty. Whenever a score is undefined, the curves, direction counts, and reason are still reported.

\section{Reference Spectra and Numerical Sensitivity}
\label{app:numerics}
\subsection{Singular-Value Spectra of Reference Configurations}
Table~\ref{tab:spectra} lists the normalized spectra underlying Figure~\ref{fig:spectra}, rounded to three decimal places. At $\tau=0.05$, the retained dimensions of Battery, Melt-pool, and Flooding are 12, 9, and 12. Determining exact rank requires the unrounded spectra and numerical rank tolerance.

\begin{table}[htbp]
\caption{Normalized singular-value spectra for the three reference configurations. Displayed zeros may result from padding or rounding and do not establish exact null-space dimensions.}
\label{tab:spectra}
\centering\small
\begin{tabular}{@{}cccc@{}}
\toprule
Index & Battery & Melt-pool & Flooding\\
\midrule
1 & 1.000 & 1.000 & 1.000\\
2 & 0.713 & 0.682 & 0.961\\
3 & 0.713 & 0.682 & 0.960\\
4 & 0.488 & 0.362 & 0.960\\
5 & 0.452 & 0.335 & 0.927\\
6 & 0.311 & 0.170 & 0.757\\
7 & 0.311 & 0.170 & 0.740\\
8 & 0.226 & 0.089 & 0.739\\
9 & 0.200 & 0.077 & 0.739\\
10 & 0.154 & 0.041 & 0.703\\
11 & 0.154 & 0.041 & 0.701\\
12 & 0.135 & 0.028 & 0.701\\
13 & 0.004 & 0.000 & 0.039\\
14--17 & 0.000 & 0.000 & 0.000\\
\bottomrule
\end{tabular}
\end{table}

\subsection{Grid Resolution and Finite-Difference Step Size}
Refining the grid from $48\times48$ to $64\times64$ leaves the effective ranks of Battery and Flooding at 12, while increasing Melt-pool from 7 to 9. This change indicates sensitivity of near-threshold singular directions to discretization; effective rank alone is insufficient to establish subspace convergence.

As the finite-difference step varies from $10^{-2}$ to $3\times10^{-4}$, the largest change in the Battery Jacobian norm between adjacent step sizes is $5.0\times10^{-6}$. Along the least sensitive direction, the ratios of observation change to shape change are $2.0\times10^{-7}$ for Battery, $9.2\times10^{-14}$ for Melt-pool, and $4.1\times10^{-13}$ for Flooding.

These norms and ratios support only local diagnostics. Establishing subspace convergence additionally requires comparing $\|J_h-J_{h/2}\|$, near-threshold singular values, and projection-matrix distances, and reporting perturbation amplitudes and absolute whitened observation differences.

\section{Supplementary Experimental Results}
\label{app:additional}
\subsection{High Claim Budgets in Flooding}
Table~\ref{tab:highcoverage} supplements the Flooding M12 ID results in Section~\ref{subsec:coverage}, covering claim budgets $j=10,\ldots,16$. The random reference uses the exact coverage $Q=j/17$. Abstention is below the random reference at all seven budgets, consistent with an overall geometric AUROC above chance (Appendix~\ref{app:abstention_area}).

\begin{table}[htbp]
\caption{High-claim-budget results for Flooding M12, averaged over six model families.}
\label{tab:highcoverage}
\centering\small
\begin{tabular}{@{}ccccc@{}}
\toprule
$j$ & $Q=j/17$ & Conformal $Z$ & Random $1-Q$ & Difference\\
\midrule
10 & 0.588 & 0.400 & 0.412 & $-0.012$\\
11 & 0.647 & 0.300 & 0.353 & $-0.053$\\
12 & 0.706 & 0.267 & 0.294 & $-0.027$\\
13 & 0.765 & 0.167 & 0.235 & $-0.069$\\
14 & 0.824 & 0.067 & 0.176 & $-0.110$\\
15 & 0.882 & 0.067 & 0.118 & $-0.051$\\
16 & 0.941 & 0.000 & 0.059 & $-0.059$\\
\bottomrule
\end{tabular}
\end{table}

\subsection{Poseidon-T Adaptation and Prediction Averaging}
\label{app:poseidon}
\paragraph{Experimental setup.}
The adapted Poseidon-T has 20.77M parameters, takes rasterized sensor histories as input, and predicts 17 interface coefficients. The dataset contains 576 trajectories. For each configuration, five random seeds are trained for a budget of 200 epochs with validation-based early stopping. We compare OOD coefficient NRMSE for individual models and the average prediction of five models. This experiment assesses point estimation, not backbone rankings under equal computational budgets.

\begin{table}[htbp]
\caption{Individual-model predictions and five-model prediction averaging for Poseidon-T. Single-model OOD coefficient NRMSE is the mean $\pm$ standard deviation over five seeds.}
\label{tab:poseidon}
\centering\small\setlength{\tabcolsep}{3pt}
\begin{tabular}{@{}lccccc@{}}
\toprule
Configuration & \shortstack{Single-model\\OOD NRMSE} & \shortstack{Averaged\\OOD NRMSE} & \shortstack{Relative\\change} & \shortstack{No abstention\\$\scrs$} & \shortstack{Oracle\\$\scrs$}\\
\midrule
Battery M0 & $0.149\pm0.007$ & 0.148 & $-0.7\%$ & 0.000 & 1.000\\
Battery M2 & $0.159\pm0.026$ & 0.156 & $-1.8\%$ & 0.000 & 0.840\\
Battery M8 & $0.148\pm0.026$ & 0.144 & $-2.5\%$ & 0.000 & 0.857\\
Melt-pool M12 & $0.193\pm0.040$ & 0.121 & $-37.1\%$ & 0.000 & 0.562\\
Flooding M12 & $0.171\pm0.056$ & 0.126 & $-26.4\%$ & 0.000 & 0.836\\
Macro-average & 0.164 & 0.139 & --- & --- & ---\\
\bottomrule
\end{tabular}
\end{table}

\paragraph{Results.}
Prediction averaging reduces OOD coefficient error in all five configurations, lowering macro-averaged NRMSE from 0.164 to 0.139. The reductions are larger for Melt-pool M12, from 0.193 to 0.121, and Flooding M12, from 0.171 to 0.126 (Figure~\ref{fig:poseidon}).

\begin{figure}[htbp]
\centering
\includegraphics[width=\linewidth]{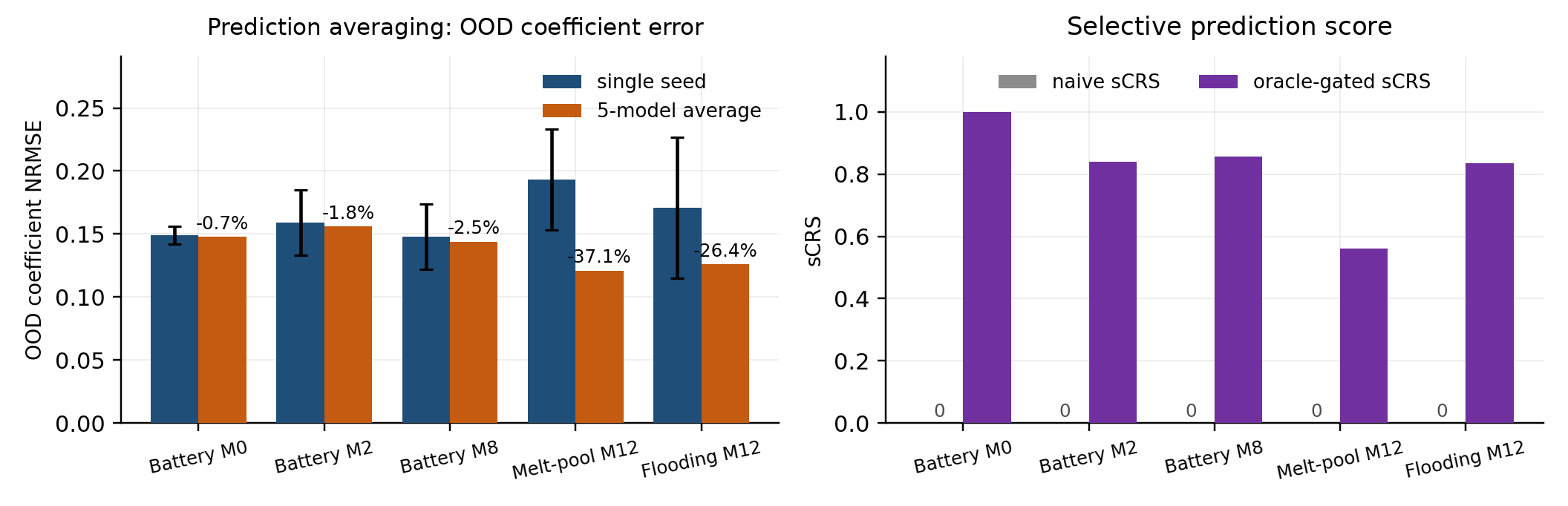}
\caption{Prediction averaging and selective scores for Poseidon-T. Left: single-model and five-model-averaged OOD coefficient NRMSE; error bars show the five-seed standard deviations of individual models. Right: $\scrs$ for no abstention and Oracle gating.}
\label{fig:poseidon}
\end{figure}

All 25 point-estimation models have zero no-abstention $\scrs$, reflecting their no-abstention behavior in configurations with below-threshold directions; this alone does not determine their recovery capability. Prediction averaging is also not equivalent to effective confidence-based selection. The latter requires a separate evaluation of ensemble spread using observation geometry that includes the same history inputs.

\end{document}